\def\Dd{{\mathcal{D}}}
\def\bX{{\bm{X}}}
\def\bx{{\bm{x}}}
\def\ba{{\bm{a}}}
\def\btheta{{\bm{\theta}}}
\def\param{{\boldsymbol{\rho}}}
\def\transfonop{{\mathcal{T}}}
\def\transfo{{\transfonop_\param}}
\def\transforad{{\transfonop_{\alpha,\beta}}}
\DeclareMathOperator{\argmax}{argmax}
\newcommand{\uargmax}[1]{\underset{#1}{\argmax}\;}
\newtheorem{prop}{Proposition}
\title{Template matching with noisy patches: A contrast-invariant GLR test}
\begin{document}
\sloppy
\maketitle
\begin{abstract}
Matching patches from a noisy image to atoms in a dictionary of patches
is a key ingredient to many techniques in image
processing and computer vision. By representing with a single atom all
patches that are identical up to a radiometric transformation,
dictionary size can be kept small, thereby retaining good computational efficiency.
Identification of the atom in best match with a given noisy patch then
requires a contrast-invariant criterion.
%To limit the size of dictionaries,
%atoms should represent a class of patches that are identical up
%to a radiometric transformation. A dictionary should so be
%manipulated with a contrast invariant template matching criterion.
%A fundamental difficulty when matching a noisy patch to a template
%is to decide whether the differences should be ascribed to
%noise or intrinsic dissimilarity up to a radiometric transformation.
In the light of detection theory, we propose a new criterion
that ensures contrast invariance and robustness to noise.
We discuss its theoretical grounding and
assess its performance under
Gaussian, gamma and Poisson noises.
\end{abstract}
\begin{keywords}
Template matching,
Likelihood ratio test,
Detection theory,
Image restoration
\end{keywords}
\section{Introduction}
\label{sec:intro}

In this paper, we address the problem of template matching of patches under various noise conditions.
More precisely, when provided a collection of noise-free templates (the dictionary),
we focus on finding for a given noisy patch the best matching element in the dictionary.
Template matching is at the heart of many recent image processing and computer vision techniques, for instance, for denoising \cite{elad2006image}
or classification with a labeled dictionary \cite{mairal2012task}.
We focus in the following on how to perform template matching when the
noise departs from the Gaussian distribution.
Inspired by our previous work about the comparison of noisy patches
\cite{deledalle2012compare}, we extend here the proposed methodology to the
problem of template matching.

By $\bx$, we denote a patch of an image, i.e., a collection of $N$
noisy pixel values. By $\ba \in \Dd$, we denote a template taken from a dictionary
$\Dd$ ($\ba$ also has $N$ pixels). We do not specify here a shape
but consider that the values
are ordered so that when a patch $\bx$ is compared to a template
$\ba$, values with identical index are in spatial correspondence.
For best efficiency, dictionaries should be as small as possible while being representative of images.
To limit the size of dictionaries,
a common idea is to let atoms represent a class of patches that are identical up to a radiometric transformation.
Hence, a template should essentially encode the geometrical patterns of a patch rather than its radiometry.
Of course, to exploit such a dictionary, the template matching criterion
must be invariant to the radiometric changes considered 
while being robust to the noise statistic.

We assume that the noise can be modeled by a (known) distribution
so that a noisy patch $\bx$ is a realization of an
$N$-dimensional random variable $\bX$ modeled by a probability  density or mass function
$p(\cdot|\btheta)$.
The vector of parameters $\btheta$ is referred in the
following as the noise-free patch.
%We will consider in our experiments white noise, i.e.,
%Under spatial independence assumption,
%the probability density function (pdf) of $\bX$ is
%$p(\bx | \btheta)
%= \prod_{k = 1}^N p(x_k | \theta_k)$, even if the definitions of
%all criteria are general enough to deal with correlated noise.
For example, a patch $\boldsymbol x$ damaged by additive white Gaussian
noise with standard deviation $\sigma$ can be modeled by:
\begin{align}
  \label{eq:additive_noise}
  \boldsymbol x = \boldsymbol \theta + \sigma \boldsymbol n
\end{align}
where $\boldsymbol \theta$ is the noise-free patch and $\boldsymbol n$ is
the realization of a zero-mean normalized Gaussian random vector with
independent elements.
It is straightforward to see that $\bX|\btheta$ follows
a Gaussian distribution with mean $\boldsymbol \theta$ and
standard deviation $\sigma$.
While such decompositions exist for some specific distributions
(e.g., gamma distribution involves a multiplicative
decomposition), in most cases no decomposition of 
$\boldsymbol x$ in terms of $\btheta$ and
an independent noise component may be found (e.g., under Poisson noise).
In general, when noise departs from additive Gaussian noise,
the link between $\bX$ and $\btheta$ is
described by the probability density or mass function $p(\bx|\btheta)$.

\section{Problem definition}

A template matching criterion $c$
defines a mapping from a pair formed by a noisy patch and a template
$(\bx,\ba)$ to a real value.
The larger the value of $c(\bx,\ba)$,
the more relevant the match between $\bx$ and the template $\ba$.
We consider that a matching criterion $c$
is invariant with respect to the family of transformations $\transfo$
parametrized by vector $\param$, if
\begin{align*}
  \forall \bX, \ba, \param, \quad&
  c({\bm X}, \transfo({\bm a})) = c({\bm X}, {\bm a})\,.
\end{align*}
A typical example is to consider invariance up to an affine change of contrast:
$\transfo(\ba) = \transforad(\ba) = \alpha \ba + \beta \mathds{1}$,
where $\mathds{1}_k = 1$ for all $1 \leq k \leq N$.
In the light of detection theory, we consider that a noisy patch $\bx$
and a template $\ba$ are in match (up to a transformation $\transfo$)
when $\bx$ is a realization of a random variable $\bX$
following a distribution $p(. | \btheta)$ for which there exists a
vector of parameters $\param$ such that
$\btheta = \transfo(\ba)$.
The template matching problem can then be rephrased as the following
hypothesis test (a parameter test):
\begin{equation*}
  \begin{array}{llr}
    \mathcal{H}_0 : \exists \param \quad &\boldsymbol\theta = \transfo(\ba)
    & \text{(null hypothesis)} ,\\
    \mathcal{H}_1 :
    \forall \param \quad
    & \boldsymbol\theta \ne \transfo(\ba)
    & \text{(alternative hypothesis)} .
  \end{array}
\end{equation*}

For a given template matching criterion $c$,
the \textit{probability of false alarm}
(to decide $\mathcal{H}_1$ under $\mathcal{H}_0$)
and the \textit{probability of detection}
(to decide $\mathcal{H}_1$ under $\mathcal{H}_1$)
are defined as:
\begin{align}
P_{FA}%^c(\tau)
&=
\mathbb{P}(c(\bX, \ba) < \tau |
\param, \mathcal{H}_0),\\
P_{D}%^c(\tau)
&=
\mathbb{P}(c(\bX, \ba) < \tau |
\btheta, \mathcal{H}_1).
\end{align}
Note that the inequality symbols are reversed compared to usual definitions
since we consider detection of mismatch based on the matching measure $c$.

%Let $\bx$ and $\bx_2$
%be two realizations
%of the respective random variable
%$\bX$ and $\bX_2$.
%i.e. $x = X(F)$ and $x_2 = X_2(F_2)$ where $F$ and $F_2$
%are two arbitrary scene configurations in $\mathcal{F}$.
According to Neyman-Pearson theorem,
the optimal criterion, i.e., the criterion which maximizes $P_D$
for any given $P_{FA}$, is the likelihood ratio (LR) test:
% Rq: c'est peut-être dommage de donner deux noms: un sigle (LR)
% et un symbole (L). Qu'est-ce que tu en penses?
\begin{align}
  \label{eq:lr}
  \mathcal{L}(\bx, \ba)
  =
  \frac{p(\bx |
    \btheta = \transfo(\ba),
    \mathcal{H}_0)
  }{p(\bx |
    \btheta,
    \mathcal{H}_1)
  } .
\end{align}
The application of the likelihood ratio test requires the knowledge of
$\param$ and $\btheta$
(the parameters of the transformation and the noise-free patch) which, of course, are
unavailable.
%Unfortunately, it is clear that eq.~(\ref{eq:lr}) cannot be evaluated
%since parameters $\btheta$,
%$\btheta$ and $\btheta_{12}$
%are unknown.
%Such a problem is called a \textit{composite hypothesis problem}.
Our problem is thus a \textit{composite hypothesis problem}.
A criterion maximizing $P_D$ for all $P_{FA}$ and all
values of the unknown parameters is said \textit{uniformly most powerful}
(UMP).
Kendall and Stuart (1979) showed that no
UMP detector exists in general for our \textit{composite hypothesis problem}
\cite{kendall1979advanced},
so that any criteria can be defeated by another one
at a specific $P_{FA}$.
The research of a universal template matching criterion is then futile.
We address here the question of how different
criteria behave on patches extracted from natural images.

\section{Contrast-invariant template matching}
\label{sec:patchsim}

In this section we consider radiometric changes $\transforad$ defined by
two parameters: $\alpha$ and $\beta$. We present different candidate criteria for
contrast-invariant template matching and discuss their robustness to
the noise statistics.

{\bfseries{Normalized correlation:}}
The most usual way to measure similarity up to an affine change of contrast of the form
$\transforad(\bx) = \alpha \bx + \beta \mathds{1}$
between two (non-constant) vectors $\bx$ and $\ba$
is to consider their normalized correlation:
\begin{align}
  \label{eq:gauss_kernel}
  \mathcal{C}(\bx, \ba)
  =
  \left|
    \frac{
      \sum_{k=1}^N (x_k - \bar{x}) (a_k - \bar{a})
    }{
      \sqrt{\sum_{k=1}^N (x_k - \bar{x})^2 \sum_{k=1}^N (a_k - \bar{a})^2}
    }
  \right|~.
\end{align}
where $\bar{x} = \frac{1}{N} \sum_k x_k$ and
$\bar{a} = \frac{1}{N} \sum_k a_k$.
Indeed, it is straightforward to show that
the correlation provides the desired contrast invariance property.
Regarding noise corruptions, it is not straightforward whether the
correlation is a robust template matching criterion.
We will show that, under the assumption of Gaussian noise,
for a fixed observation $\bx$, the vector $\ba \in \Dd$
that maximizes the correlation also maximizes the likelihood
up to an affine change of contrast.

{\bfseries{Generalized Likelihood Ratio:}}
Motivated by optimality guarantees of the LR test
\eqref{eq:lr} and our previous work in \cite{deledalle2012compare},
a template matching criterion can be defined
from statistical detectors designed for
\textit{composite hypothesis problems}.
The generalized LR (GLR) replaces the unknowns
$\alpha$, $\beta$ and $\btheta$ in eq.~(\ref{eq:lr})
by their maximum likelihood estimates (MLE) under each hypothesis:
%the respective hypothesis $\mathcal{H}_0$ and $\mathcal{H}$:
\begin{align}\label{eq:glr}
  \mathcal{G}(\bx, \ba)
  &=
  \frac{
    \sup_{\alpha, \beta}
    p(\bx | \btheta = \transforad(\ba), \mathcal{H}_0)
  }{
    \sup_{\boldsymbol{t}}
    p(\bx | \btheta = \boldsymbol{t}, \mathcal{H}_1)
  }
  \nonumber\\
  & =
  \frac{
    p(\bx | \btheta = \transfonop_{\hat{\alpha}, \hat{\beta}}(\ba))
  }{
    p(\bx| \btheta = \hat{\boldsymbol{t}})
  }
\end{align}
where $\hat{\alpha}$, $\hat{\beta}$ and $\hat{\boldsymbol{t}}$ are the MLE of
the unknown $\alpha$, $\beta$ and $\btheta$.
By construction, the GLR satisfies the contrast invariance
property.
% Phrase pas tres claire pour moi:
Asymptotically to the SNR, GLR is optimal due to the efficiency of
MLE. Its asymptotic distribution
is known and so are the $P_{FA}$ values associated
to any given threshold $\tau$:
GLR is asymptotically a constant false alarm rate (CFAR) detector.
The GLR test is also invariant upon changes of variable \cite{kay2003invariance}:
it does not depend on the representation of the noisy patch.
While we noted that there are no UMP
detectors for our composite hypothesis problem, GLR is
\emph{asymptotically} UMP among invariant tests \cite{lehmann1959optimum}.
Due to its dependency on MLE,
the performance of GLR may fall in low SNR conditions, where
the MLE is known to behave poorly.

{\bfseries{Stabilization:}}
A classical approach to extend the applicability of a matching criterion
to non-Gaussian noises is to apply a transformation to the noisy
patches. The transformation is chosen so that the transformed patches
follow a (close to) Gaussian distribution with constant variance (hence
their name: variance-stabilization transforms).
This leads for instance to the homomorphic
approach which maps multiplicative noise to additive noise with stationary
variance. This is also the principle of Anscombe transform and its variants
used for Poisson noise.
%, but which unfortunately cannot stabilize the variance perfectly.
Given an application $s$ which stabilizes the variance
for a specific noise distribution, stabilization-based criteria
can be obtained using \eqref{eq:gauss_kernel} or \eqref{eq:glr} on the~output~of~$s$:
\begin{align}
  \mathcal{S}_\mathcal{C}(\bx, \ba)
  &= \mathcal{C}(s(\bx), s(\ba))~,\\
  \mathcal{S}_\mathcal{G}(\bx, \ba)
  &= \mathcal{G}(s(\bx), s(\ba))
\end{align}
where the likelihood function $p(s(\bx)| s(\btheta))$ is assumed
to be a Gaussian distribution centered on $s(\btheta)$ with a
covariance matrix $\sigma^2 \mathbf{I}$.
As we will see, an advantage of this approach compared to the GLR criterion is that
it is usually simpler to evaluate in closed-form, and then, leads to faster algorithms.
An important limitation of this approach lies nevertheless
in the existence of a stabilization function $s$.
Beyond existence, the performance of this approach may fall
if the transformed data distribution is far from the Gaussian distribution.

\section{GLR in different noise conditions}
\label{sec:examples}

In this section, we provide closed-form expressions or iterative schemes to
evaluate the GLR in the case of Gaussian noise, gamma noise
and Poisson noise.

\begin{prop}[Gaussian noise]
Consider that $\bX$ follows a Gaussian distribution such that
  \vspace{-0.1cm}
\begin{align*}
  p(x_k | \theta_k) = \frac{1}{\sqrt{2 \pi} \sigma}
  \exp \left(-\frac{ (x_k - \theta_k)^2}{2 \sigma^2} \right)~,
\end{align*}
and consider the class of affine transformations
$\transforad(\bx) = \alpha \bx + \beta \mathds{1}$.
In this case, we have
  \vspace{-0.1cm}
\begin{align*}
  -\log \mathcal{G}(\bx, \ba)
  &= (1 - \mathcal{C}(\bx, \ba)^2)
  \frac{
    \| {\bm x} - \bar{x} \mathds{1} \|_2^2
  }{
    2 \sigma^2
  }~.
\end{align*}
\end{prop}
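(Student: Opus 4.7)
The plan is to evaluate both the numerator and denominator of $\mathcal{G}(\bx, \ba)$ in closed form using the explicit Gaussian density, then take logs.

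First I would handle the denominator. Under $\mathcal{H}_1$, $\btheta$ is unconstrained, and for an isotropic Gaussian the MLE is trivially $\hat{\bm t} = \bx$, so the denominator equals $(2\pi\sigma^2)^{-N/2}$. The $-\log$ of the denominator is therefore just the normalization constant, which will cancel with the corresponding constant in the numerator so only the quadratic terms survive.

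Next I would tackle the numerator. Maximizing $p(\bx\mid\btheta = \alpha\ba + \beta\mathds{1})$ is equivalent to minimizing $\|\bx - \alpha\ba - \beta\mathds{1}\|_2^2$ over $(\alpha,\beta)\in\mathbb{R}^2$. This is a two-parameter linear least-squares problem, i.e. the squared distance from $\bx$ to the subspace $V = \mathrm{span}(\ba,\mathds{1})$. The clean way to carry out the projection is to pick the orthogonal basis $\{\mathds{1},\,\tilde{\ba}\}$ of $V$, where $\tilde{\ba} = \ba - \bar{a}\mathds{1}$ (orthogonality to $\mathds{1}$ is immediate since $\sum_k(a_k-\bar{a})=0$). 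With $\tilde{\bx} = \bx - \bar{x}\mathds{1}$, the projection of $\bx$ onto $V$ is $\bar{x}\mathds{1} + \frac{\langle\tilde{\bx},\tilde{\ba}\rangle}{\|\tilde{\ba}\|_2^2}\tilde{\ba}$, and the residual squared norm reads
\begin{align*}
\min_{\alpha,\beta}\|\bx - \alpha\ba - \beta\mathds{1}\|_2^2 = \|\tilde{\bx}\|_2^2 - \frac{\langle\tilde{\bx},\tilde{\ba}\rangle^2}{\|\tilde{\ba}\|_2^2}.
\end{align*}
This is the key identity. Factoring out $\|\tilde{\bx}\|_2^2$ and recognising the definition of $\mathcal{C}(\bx,\ba)$ in \eqref{eq:gauss_kernel} gives $\|\bx-\bar{x}\mathds{1}\|_2^2\,(1-\mathcal{C}(\bx,\ba)^2)$.

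Finally, I would combine the two pieces. Since $-\log p(\bx\mid\btheta) = \frac{\|\bx-\btheta\|_2^2}{2\sigma^2} + \frac{N}{2}\log(2\pi\sigma^2)$, the normalization constants cancel between numerator and denominator, and the residual of $\mathcal{H}_1$ is zero, leaving
\begin{align*}
-\log\mathcal{G}(\bx,\ba) = \frac{1}{2\sigma^2}\min_{\alpha,\beta}\|\bx-\alpha\ba-\beta\mathds{1}\|_2^2 = (1-\mathcal{C}(\bx,\ba)^2)\,\frac{\|\bx-\bar{x}\mathds{1}\|_2^2}{2\sigma^2},
\end{align*}
which is the claimed formula. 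There is no real obstacle here; the only subtlety is choosing the orthogonal reparametrisation $(\mathds{1},\tilde{\ba})$ of $V$ so that the residual collapses directly to the $(1-\mathcal{C}^2)\|\tilde{\bx}\|_2^2$ form without having to solve the $2\times2$ normal equations explicitly.
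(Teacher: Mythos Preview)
Your proposal is correct and follows essentially the same approach as the paper: both identify the MLE under $\mathcal{H}_1$ as $\hat{\bm t}=\bx$, reduce $-\log\mathcal{G}$ to the least-squares residual $\tfrac{1}{2\sigma^2}\min_{\alpha,\beta}\|\bx-\alpha\ba-\beta\mathds{1}\|_2^2$, and then simplify. The only difference is cosmetic: the paper writes out the explicit regression coefficients $\hat{\alpha},\hat{\beta}$ and substitutes them, whereas you compute the residual directly via orthogonal projection onto $\mathrm{span}(\mathds{1},\ba-\bar{a}\mathds{1})$, which is a slightly cleaner bookkeeping choice but not a different idea.
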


\begin{proof}
  %Since $Var[X_k] = \sigma^2$ is independent of $\theta_k$, the variance
  %is stabilized with $s : x \mapsto x$.
  For the Gaussian law, the MLE of $\bm{\theta}$ is given by
  $\hat{\boldsymbol{t}} = \bx$ so that
  \vspace{-0.2cm}
  \begin{align*}
    -\log \mathcal{G}(\bx, \ba)
    = \frac{\| x - \hat{\alpha} \ba - \hat{\beta} \mathds{1} \|_2^2}{2 \sigma^2}
  \end{align*}
  and $\hat{\alpha}$ and $\hat{\beta}$
  are the coefficients of the linear least squared regression, i.e.,
  \vspace{-0.1cm}
  \begin{align*}
    \hat{\alpha} =
    \frac{
      \sum_{k=1}^N (x_k - \bar{x}) (a_k - \bar{a})
    }{
      \sum_{k=1}^N (a_k - \bar{a})^2
    }
    \quad \text{and} \quad
    \hat{\beta}
    =
    \bar{x} - \alpha \bar{a}~,
  \end{align*}
  with $\bar{x}$ and $\bar{a}$ the empirical mean of $\bx$ and $\ba$.
  Injecting the expression of $\hat{\alpha}$ and $\hat{\beta}$ in the previous equation
  gives the proposed formula.
\end{proof}

Remark that for a fixed observation $\bm x$ and
any $\bm a_1, \bm a_2 \in \Dd$, if
$\mathcal{C}(\bx, \ba_1) < \mathcal{C}(\bx, \ba_2)$ then
$\mathcal{G}(\bx, \ba_1) < \mathcal{G}(\bx, \ba_2)$.
In particular, we have
  \vspace{-0.1cm}
\begin{align*}
  \uargmax{\ba \in \Dd}
  \mathcal{G}(\bx, \ba)
  &=
  \uargmax{\ba \in \Dd}
  \mathcal{C}(\bx, \ba)\\
  &=
  \uargmax{\ba \in \Dd}
  \sup_{\alpha, \beta}
  p(\bx | \btheta = \transforad(\ba), \mathcal{H}_0)
\end{align*}
which is the MLE under the hypothesis $\mathcal{H}_0$.
However, beyond equivalence of their maxima,
the GLR is not equivalent to the correlation
even in the case of Gaussian noise.
They have different detection performance
when the purpose is to take a decision by thresholding
their answer.
Compared to the correlation, GLR adapts its answer with respect to
$\frac{
    \| {\bm x} - \bar{x} \mathds{1} \|^2
  }{
    2 \sigma^2
  }$
which, in some sense, measures the signal-to-noise-ratio (SNR) in $\bm x$.
For a fixed threshold $\tau$ and $\bm a$, if the SNR of $\bm x$ is small enough, GLR will
put the pair $(\bm x, \bm a)$ in correspondence whatever their content.
In fact, when the SNR is small enough, any template up to a radiometric transform
can explain the observed realization. The correlation, which does not take into
account the noise in its definition, does not adapt to the SNR of $\bm x$.
Worse, the correlation tends to increase when the SNR of $\bm x$ decreases.
We will see in Section \ref{sec:examples} that such a behavior of GLR is of main
importance for a template matching task.

\begin{prop}[Gamma noise]
Consider that $\bX$ follows a gamma distribution such that
  \vspace{-0.1cm}
\begin{align*}
  p(x_k | \theta_k) = \frac{L^L x_k^{L-1}}{\Gamma(L) \theta_k^L}
  \exp\left( -\frac{L x_k}{\theta_k}\right)
\end{align*}
and consider the class of log-affine transformations
$\transforad(\bx) = \beta \bx^\alpha$
where $(.)^\alpha$ is the element-wise power function.
 In this case, we have
  \vspace{-0.1cm}
\begin{align*}
  -\log \mathcal{G}(\bx, \ba)
  &=
  L \sum_{k =1}^N \log\left( \frac{\hat{\beta} a_k^{\hat{\alpha}}}{x_k} \right)
\end{align*}
where $\hat{\alpha}$ and $\hat{\beta}$ can be obtained iteratively as
  \vspace{-0.1cm}
\begin{align*}
  %\left\{
      \hat{\alpha}_{i+1} \!=\! \hat{\alpha}_i -
      \frac{
        \sum_{k} \!\left( 1 - r_{k,i}\right) \log a_k
      }{
        \sum_{k} r_{k,i} (\log a_k)^2
      }
      \quad \text{and} \quad
      \displaystyle
      \hat{\beta}_{i+1} \!=\! \frac{1}{N} \sum_{k} \frac{x_k}{a_k^{\hat{\alpha}_i}}
  %\right.
\end{align*}
with $r_{k,i} = x_k / (\hat{\beta}_i a_k^{\hat{\alpha}_i})$,
whatever the initialization.
\end{prop}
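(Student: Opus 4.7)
The plan is to mirror the Gaussian proof: first compute the unconstrained MLE $\hat{\boldsymbol t}$ of $\btheta$ under $\mathcal{H}_1$, then identify the constrained MLEs $(\hat\alpha,\hat\beta)$ under $\mathcal{H}_0$ from the first-order optimality conditions, and finally inject both into the log-ratio. For the gamma likelihood, componentwise differentiation of $-\log p(\bx|\btheta)=\mathrm{const}+L\sum_k[\log\theta_k+x_k/\theta_k]$ yields $\hat t_k=x_k$. Substituting the constraint $\theta_k=\beta a_k^\alpha$, the log-ratio expands into
\begin{align*}
-\log\mathcal{G}(\bx,\ba) = L\sum_{k=1}^N \log\!\left(\frac{\hat\beta a_k^{\hat\alpha}}{x_k}\right) + L\sum_{k=1}^N (r_k-1),
\end{align*}
where $r_k = x_k/(\hat\beta a_k^{\hat\alpha})$.

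The key algebraic step is to cancel the second sum. This is a direct consequence of the $\beta$-stationarity condition for the constrained negative log-likelihood: $\partial/\partial\beta$ produces $(L/\beta)\sum_k(1-r_k)=0$, so $\sum_k r_k=N$ at the MLE, leaving only the logarithmic term and yielding the announced closed form.

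The iterative scheme is then read off from the joint stationarity conditions. At fixed $\alpha$, the $\beta$-equation has the explicit solution $\hat\beta=(1/N)\sum_k x_k/a_k^\alpha$, matching the stated $\hat\beta_{i+1}$ update. The $\alpha$-equation $f(\alpha):=\sum_k(1-r_k)\log a_k=0$ is transcendental; using $\partial r_k/\partial\alpha=-r_k\log a_k$ one computes $f'(\alpha)=\sum_k r_k(\log a_k)^2$, and a single Newton step at $(\hat\alpha_i,\hat\beta_i)$ reproduces the proposed $\hat\alpha_{i+1}$ formula. The main obstacle I expect is justifying the "whatever the initialization" claim: strict positivity of $f'$ (as soon as some $a_k\neq 1$) gives strict monotonicity of $f$ in $\alpha$ and a unique root, but global convergence of the full alternating scheme is most cleanly obtained by reparametrising in $(u,\alpha):=(\log\beta,\alpha)$, under which each summand of the objective becomes an affine function of $(u,\alpha)$ plus an exponential composed with an affine function, hence convex; convexity of the joint problem then underwrites convergence of the Newton/exact-minimization updates independently of the initial iterate.
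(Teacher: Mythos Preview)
Your proposal is correct and follows essentially the same route as the paper: compute the unconstrained MLE $\hat{\boldsymbol t}=\bx$, write the log-ratio as $L\sum_k[\log(\hat\beta a_k^{\hat\alpha}/x_k)+r_k-1]$, use the $\beta$-stationarity condition to kill the $\sum_k(r_k-1)$ term, and derive the Newton update for $\alpha$ from the first two derivatives of the constrained negative log-likelihood. Your reparametrisation $(u,\alpha)=(\log\beta,\alpha)$ to obtain \emph{joint} convexity is a genuine addition over the paper, which only invokes convexity in $\alpha$ at fixed $\beta$; this strengthens the justification of the ``whatever the initialization'' claim, although strictly speaking global convergence of Newton's method requires a bit more than convexity alone (e.g.\ self-concordance or a damping argument), a point neither the paper nor your sketch fully addresses.
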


\begin{proof}
  %Since $Var[\log X_k] = \psi(1, L)$ is independent of $\theta_k$, the variance
  %is stabilized with $s : x \mapsto \log x$.
  For the gamma law, the MLE of $\bm{\theta}$ is given by
  $\hat{\boldsymbol{t}} = \bx$ so that
  \vspace{-0.2cm}
  \begin{align*}
    -\log \mathcal{G}(\bx, \ba)
    = L \sum_{k=1}^N \left(
      \log \frac{\hat{\beta} a_k^{\hat{\alpha}}}{x_k}
      +
      \frac{x_k}{\hat{\beta} a_k^{\hat{\alpha}}}
      -
      1
    \right)~.
  \end{align*}
  The function $\beta \mapsto \sum_k -\log p(x_k | \beta a_k^\alpha)$ has a unique minimum
  at $\frac{1}{N} \sum_{k} \frac{x_k}{a_k^{\alpha}}$. Moreover,
  the function $\alpha \mapsto \sum_k -\log p(x_k | \theta_k = \beta a_k^\alpha)$ is convex and
  twice differentiable, therefore the Newton method can be used to estimate $\hat{\alpha}$
  whatever the initialization.
  Differentiating twice $\alpha \mapsto \sum_k -\log p(x_k | \theta_k = \beta a_k^\alpha)$
  gives the proposed iterative scheme.
  Injecting the value of $\hat{\alpha}$ and $\hat{\beta}$ in the previous equation
  gives the proposed formula.
\end{proof}

Unlike in the case of the Gaussian law, there is no closed-form formula of GLR in the
case of the gamma law and one should rather compute it iteratively.
Note that in practice only a few iterations are required
if one initializes using the log-moment estimation, as suggested in \cite{nicolas2002introduction},
leading to the following initialization:
  \vspace{-0.1cm}
\begin{align*}
  \hat{\alpha}_0 &= \sqrt{\frac{
      \max(\sum_k (\log x_k - \overline{\log x})^2 - \psi(1,L), 0)
    }{
      \sum_k (\log a_k - \overline{\log a})^2
    }}\\
  \hat{\beta}_0 &=
  \exp \left( \overline{\log x} - \psi(L) + \log(L) - \alpha \overline{\log a} \right)
\end{align*}
where $\overline{\log x} = \frac{1}{N} \sum_k \log x_k$ and
$\overline{\log a} = \frac{1}{N} \sum_k \log a_k$.

\begin{figure*}[!t]
  \centering
  \subfigure[]{\includegraphics[width=0.24\linewidth]{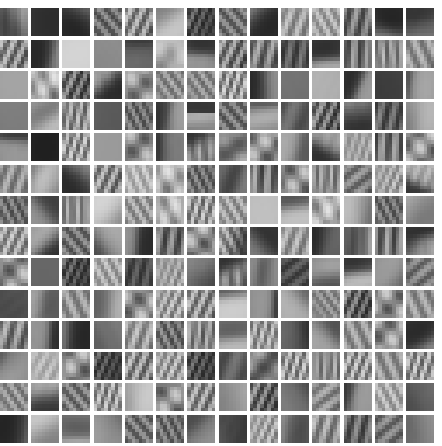}}
  \subfigure[]{\includegraphics[width=0.24\linewidth]{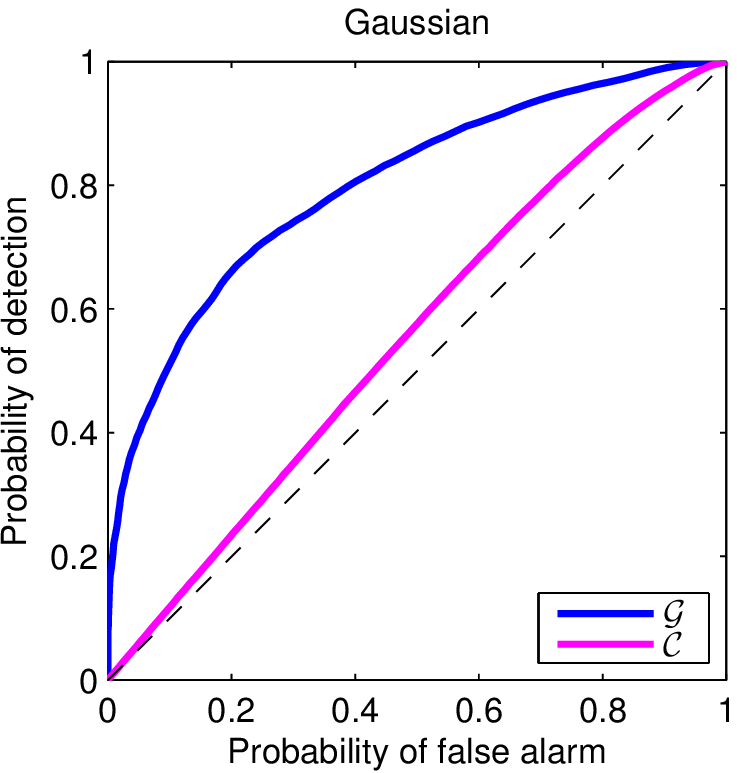}}
  \subfigure[]{\includegraphics[width=0.24\linewidth]{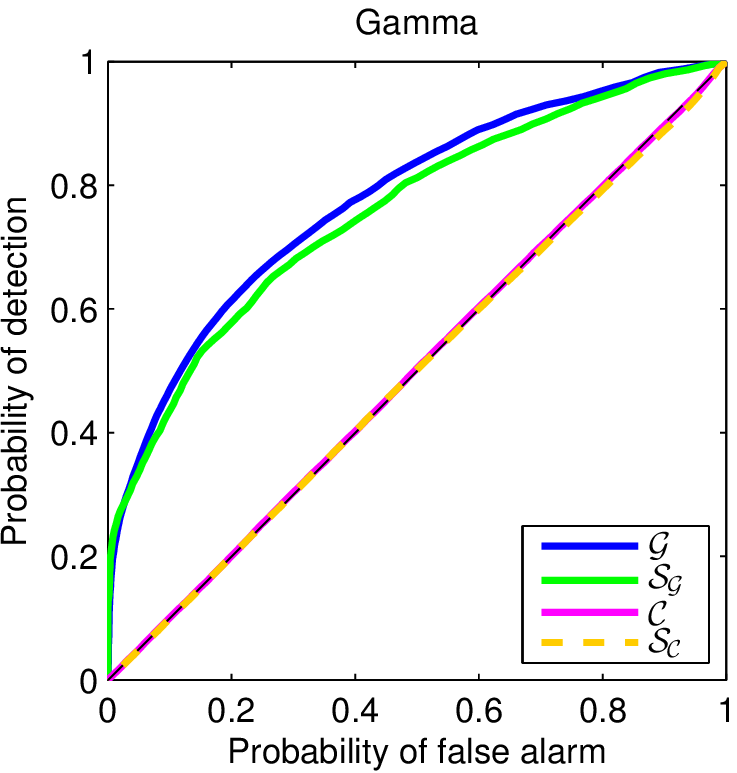}}
  \subfigure[]{\includegraphics[width=0.24\linewidth]{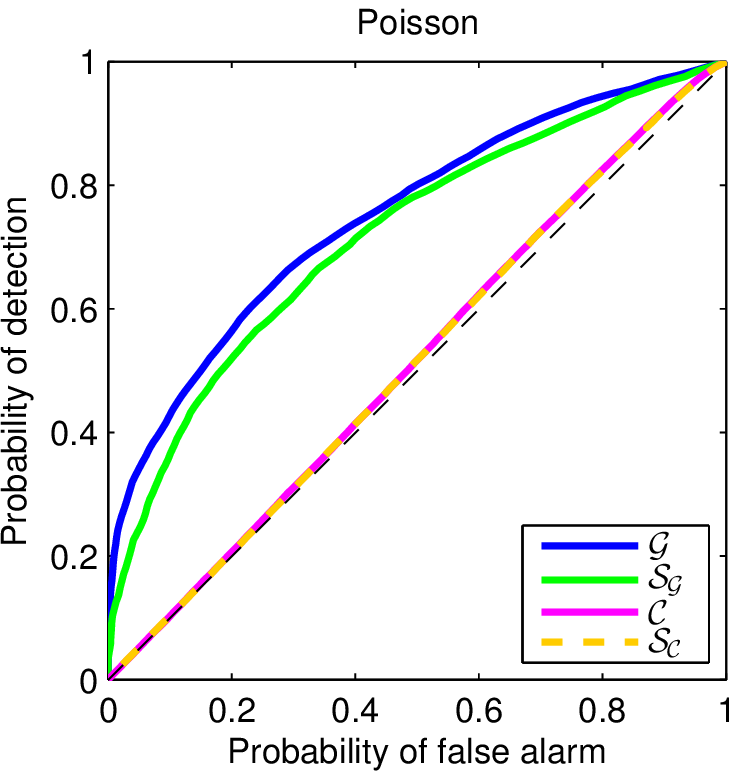}}
  \caption{(a) Patch dictionary.
    (b) ROC curve obtained under Gaussian noise,
    (c) ROC curve obtained under gamma noise
    and (d) ROC curve obtained under Poisson noise.
    In all experiments, the SNR over the whole dictionary is about $-3$dB.
  }
  \label{fig:roc}
\end{figure*}

\begin{prop}[Poisson noise]
Consider that $\bX$ follows a Poisson distribution so that
  \vspace{-0.1cm}
\begin{align*}
  p(x_k | \theta_k) = \frac{\theta_k^{x_k} e^{-\theta_k}}{x_k!}
\end{align*}
and consider the class of log-affine transformations
$\transforad(\bx) = \beta \bx^\alpha$.
In this case, we have
  \vspace{-0.1cm}
\begin{align*}
  -\log \mathcal{G}(\bx, \ba)
  &=
  \sum_{k =1}^N x_k \log\left( \frac{x_k}{\hat{\beta} a_k^{\hat{\alpha}}} \right)
\end{align*}
where $\hat{\alpha}$ and $\hat{\beta}$ can be obtained iteratively as
  \vspace{-0.1cm}
\begin{align*}
  \hat{\alpha}_{i+1} \!=\! \hat{\alpha}_i -
  \frac{
    \sum_{k} \!\left( \hat{\beta}_i a_k^{\hat{\alpha}_i} - x_k\right) \!\log a_k
  }{
    \sum_{k} \hat{\beta}_i a_k^{\hat{\alpha}_i} (\log a_k)^2
  }
  \quad \!\text{and}\! \quad
  \displaystyle
  \hat{\beta}_{i+1} \!=\! \frac{\sum_{k} x_k}{\sum_{k} a_k^{\hat{\alpha}_i}}
\end{align*}
whatever the initialization.
\end{prop}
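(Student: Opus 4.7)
The plan is to mirror the structure of the gamma case. First I would observe that, as for the gamma law, the unconstrained maximum likelihood estimate of $\btheta$ under $\mathcal{H}_1$ is the data itself: maximizing $\sum_k [x_k \log t_k - t_k - \log(x_k!)]$ coordinate-wise gives $\hat{t}_k = x_k$. Plugging $\hat{\boldsymbol{t}} = \bx$ and $\transforad(\ba) = \beta \ba^\alpha$ into the definition of $\mathcal{G}$ and cancelling the factorial terms, I would obtain
\begin{align*}
 -\log \mathcal{G}(\bx, \ba) = \sum_{k=1}^N \left[ x_k \log\!\frac{x_k}{\hat{\beta} a_k^{\hat{\alpha}}} + \hat{\beta} a_k^{\hat{\alpha}} - x_k \right],
\end{align*}
so the announced formula would follow as soon as the extra term $\sum_k (\hat{\beta} a_k^{\hat{\alpha}} - x_k)$ vanishes at the joint MLE.

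Next I would compute the MLE of $(\alpha,\beta)$. Writing $g(\alpha,\beta) = \sum_k [-x_k(\log \beta + \alpha \log a_k) + \beta a_k^\alpha]$ (up to constants), differentiation in $\beta$ gives $\partial_\beta g = -\sum_k x_k/\beta + \sum_k a_k^\alpha$, whose unique zero is $\hat{\beta}(\alpha) = \sum_k x_k / \sum_k a_k^{\alpha}$, i.e. the proposed update rule. This is exactly the identity that makes $\sum_k (\hat{\beta} a_k^{\hat{\alpha}} - x_k) = 0$, so the simplified closed-form expression of $-\log\mathcal{G}$ is recovered.

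For the update on $\alpha$, I would note that $\alpha \mapsto g(\alpha,\beta)$ is twice differentiable with
\begin{align*}
 \partial_\alpha g = \sum_k (\beta a_k^\alpha - x_k) \log a_k, \qquad \partial^2_\alpha g = \sum_k \beta a_k^\alpha (\log a_k)^2 \geq 0,
\end{align*}
hence convex in $\alpha$ (and strictly so as soon as the $a_k$ are not all equal to $1$, which is the only non-degenerate case of interest, with $\beta>0$). A single Newton step at iterate $(\hat{\alpha}_i, \hat{\beta}_i)$ then yields exactly the proposed recurrence for $\hat{\alpha}_{i+1}$. Alternating the two updates defines a block-coordinate Newton scheme on a smooth strictly convex objective, whose fixed points are the MLE, giving the iterative scheme.

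The delicate point is the global convergence statement ``whatever the initialization''. Strict convexity and smoothness alone do not in general guarantee unconditional convergence of Newton's method, so a rigorous argument would require either invoking a damped/self-concordant version or exploiting the specific structure of $g$ (e.g. the exponential form of $a_k^\alpha$ which makes $\partial^2_\alpha g$ grow faster than $|\partial_\alpha g|$ on each side of the minimizer). For this paper's scope I would merely point out the strict convexity and the uniqueness of the stationary point, and treat the global convergence claim as the practical behavior observed in the companion experiments.
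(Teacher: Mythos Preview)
Your proposal is correct and follows essentially the same route as the paper: identify $\hat{\boldsymbol{t}}=\bx$, write $-\log\mathcal{G}$ with the residual term $\sum_k(\hat\beta a_k^{\hat\alpha}-x_k)$, minimize in $\beta$ in closed form, and apply Newton's method on the convex profile in $\alpha$. You are in fact more explicit than the paper on two points---you spell out why the residual term vanishes (it is the first-order condition in $\beta$) and you rightly flag that strict convexity alone does not fully justify the unqualified ``whatever the initialization'' claim for undamped Newton, a subtlety the paper simply asserts.
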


\begin{proof}
  %Since, for $\theta_k$ big enough, $Var[2 (X_k + 3/8)^{1/2}] \approx 1$
  %is independent of $\theta_k$, the variance
  %is stabilized with $s : x \mapsto (s + 3/8 \mathds{1} )^{1/2}$.
  For the Poisson law, the MLE of $\bm{\theta}$ is given by
  $\hat{\boldsymbol{t}} = \bx$ such that
  \vspace{-0.1cm}
  \begin{align*}
    -\log \mathcal{G}(\bx, \ba)
    =
    \sum_{k =1}^N \left(
      x_k \log\left( \frac{x_k}{\hat{\beta} a_k^{\hat{\alpha}}} \right)
      + \hat{\beta} a_k^{\hat{\alpha}}
      - x_k
    \right)~.
  \end{align*}
  The function $\beta \mapsto \sum_k -\log p(x_k | \beta a_k^\alpha)$ has a unique minimum
  at $\frac{\sum_{k} x_k}{\sum_{k} a_k^{\alpha}}$. Moreover,
  the function $\alpha \mapsto \sum_k -\log p(x_k | \theta_k = \beta a_k^\alpha)$ is convex and
  twice differentiable, such that the Newton method can be used to estimate $\hat{\alpha}$
  whatever the initialization.
  Differentiating twice $\alpha \mapsto \sum_k -\log p(x_k | \theta_k = \beta a_k^\alpha)$
  gives the proposed iterative scheme.
  Injecting the value of $\hat{\alpha}$ and $\hat{\beta}$ in the previous equation
  gives the proposed formula.
\end{proof}

Again there is no closed-form formula of GLR, but
in practice only a few iterations are required
if one uses the $\hat{\alpha}$ and $\hat{\beta}$ that minimize
the linear least square error between $\log {\bx}$ and $\log {\ba}$.

\section{Evaluation of performance}
\label{sec:evaluation}

\subsection{Detection performance}

We evaluate the relative performance of the correlation, GLR and
the variance stabilization based matching criteria on a dictionary
composed of $196$ noise-free patches
of size $N\!=\!8\!\times\!8$. The noise-free
patches have been obtained using the k-means on patches extracted from the classical
$512\!\times\!512$ \textit{Barbara} image.
Each noisy patch $\bx$ is a noisy realization of the noise-free patches
under Gaussian, gamma or Poisson noise with an overall SNR of about $-3$dB.
Each template $\ba$ is a randomly transformed atom of the dictionary
up to an affine change of contrast for the experiments involving Gaussian noise,
and up to a log-affine change of contrast
under gamma or Poisson noises.
All criteria are evaluated for all pairs $(\bx, \ba)$.
The process is repeated $20$ times with independent noise realizations
and radiometric transformations.

The performance of the matching criteria is given in term of their
\textit{receiver operating characteristic} (ROC) curve, i.e., the curve
of $P_D$ with respect to
$P_{FA}$, where we have relaxed the hypothesis test as
  \vspace{-0.1cm}
\begin{equation*}
  \begin{array}{llr}
    \mathcal{H}_0 : \exists \alpha, \beta \quad &\boldsymbol\theta \approx \transforad(\ba)
    & \text{(null hypothesis)} ,\\
    \mathcal{H}_1 :
    \forall \alpha, \beta \quad
    & \boldsymbol\theta \not\approx \transforad(\ba)
    & \text{(alternative hypothesis)}
  \end{array}
\end{equation*}
and where $\boldsymbol\theta \approx \transforad(\ba)$ reads as: on average, the noise-free patch
$\transforad(\ba)$ explains almost as well the realizations of
$\bX$ than the actual noise-free patch $\btheta$, and is measured by:
  \vspace{-0.1cm}
\begin{align*}
  \Dd_{KL}( \btheta \;\|\; \transforad(\ba) ) \leq \nu\,,
\end{align*}
where $\Dd_{KL}$ is the Kullback-Leibler divergence and $\nu$ is a
small value (chosen here equal to $0.02$).
Results are given in Figure \ref{fig:roc}.
Even with Gaussian noise or with variance stabilization,
the correlation behaves poorly in noisy condition.
The generalized likelihood ratio (GLR) is
the most powerful criterion followed by
the GLR with variance stabilization.

\begin{figure*}[!t]
  \centering
  % \begin{sideways}Gamma\end{sideways}
  \subfigure[]{\includegraphics[width=0.24\linewidth,viewport=0 0 128 112,clip]{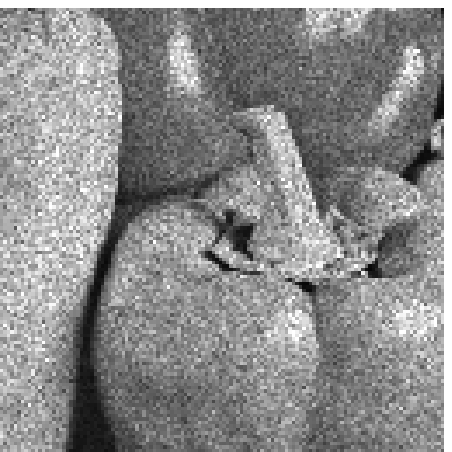}}
  \hfill
  \subfigure[]{\includegraphics[width=0.24\linewidth,viewport=0 0 128 112,clip]{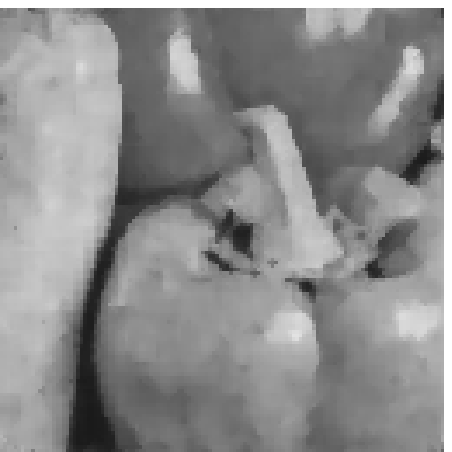}}
  \hfill
  \subfigure[]{\includegraphics[width=0.24\linewidth,viewport=0 0 128 112,clip]{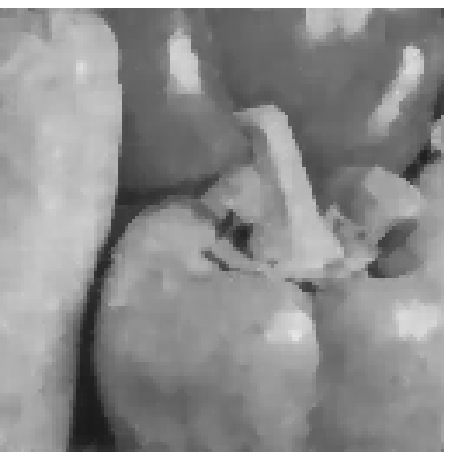}}
  \hfill
  \subfigure[]{\includegraphics[width=0.24\linewidth,viewport=0 0 128 112,clip]{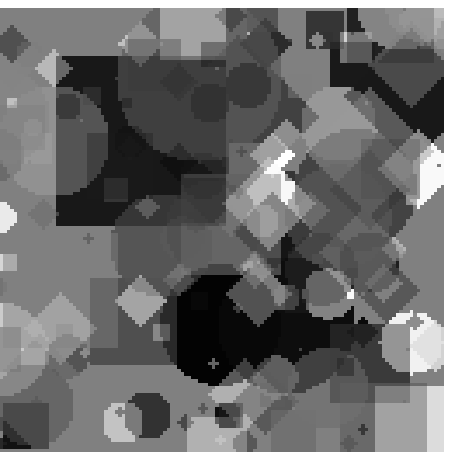}}\\[-0.2cm]
  \caption{(a) Noisy input image damaged by gamma noise (PSNR=21.14).
    (b) Denoised image using the GLR after variance stabilization
    (PSNR=27.42).
    (c) Denoised image using the GLR adapted to gamma noise
    (PSNR=27.53).
    (d) Image composed of the atoms of the dictionary.
  }
  \label{fig:denoising_results}
\end{figure*}

\subsection{Application to dictionary-based denoising}

We exemplify here the performance of GLR in a dictionary-based denoising task.
The dictionary $\Dd$ is considered
describing a generative model of the patches $\bx$ of the noisy image as
realizations of $\bX$ following a distribution of parameter $\btheta = \transforad(\ba)$
with $\ba \in \Dd$.
Under this model, we suggest estimating each patch of the image as:
  \vspace{-0.1cm}
\begin{align}\label{eq:glr_denoising}
  \hat{\btheta}(\bx)
  =
  \frac{1}{Z} \sum_{\ba \in \Dd} \mathcal{G}(\bx , \ba) \ba^\star
  \quad
  \text{with}
  \quad
  Z = \sum_{\ba \in \Dd} \mathcal{G}(\bx, \ba)\,,
\end{align}
\vspace*{-.48\baselineskip}\hfill\\
where $\ba^\star = \transfonop_{\hat{\alpha},\hat{\beta}}(\ba)$
and $\hat{\alpha}$ and $\hat{\beta}$ are the MLE of $\alpha$ and $\beta$
used in the calculation of $\mathcal{G}(\bx, \ba)$.
Equation \eqref{eq:glr_denoising} has a Bayesian interpretation as
the posterior mean estimator:
\vspace*{-1.1\baselineskip}\hfill\\
\begin{align}
  \hat{\btheta}(\bx)
  =
  \frac{\sum_{\ba \in \Dd} p(\ba^\star | \bx) \ba^\star}{\sum_{\ba \in \Dd} p(\ba^\star | \bx)}\,,
\end{align}
\vspace*{-.75\baselineskip}\hfill\\
considering a priori that the frequencies of the atoms of $\Dd$ are uniform in the image.
The posterior mean is known to minimize the Bayesian least square error
$\mathbb{E} \left[ \| \hat{\btheta}(\bX) - \btheta \|_2^2 \;|\; \btheta \right]$.

Figure \ref{fig:denoising_results} shows the denoising results obtained on a
$128 \times 128$ image damaged by gamma noise (with $L = 10$) using
\eqref{eq:glr_denoising} with the GLR adapted to gamma noise and
with the GLR adapted to a Gaussian law after variance stabilization%
\footnote{when using stabilization, a debiasing step is performed following \cite{xie2002ssr}.}.
The dictionary $\Dd$
is chosen as the set of all atoms extracted from a $128 \times 128$
image (a.k.a., an epitome) built following the transparent dead leaves model
of \cite{galerne2012transparent}.
This model ensures the dictionary to be shift invariant \cite{jost2006motif,benoit2011sparse}
while representing information of different scales.
As in \cite{jost2006motif,benoit2011sparse}, we manipulate epitomes
in Fourier domain in order to evaluate eq.~\eqref{eq:glr_denoising} efficiently.
%In the case of the gamma law, this can be done only
%if the values $\alpha$ are limited and fixed in advance.
%Here, we have chosen $\alpha \in [0.1, 0.5, 1, 1.5]$.
Eventually, Fig.~\ref{fig:denoising_results} shows that using the GLR
for the gamma law or for the Gaussian law after stabilizing
the variance are both satisfactory visually and in term of PSNR.

\section{Conclusion}
Normalized correlation is widely used as a contrast-invariant
criterion for template matching. We have shown that the GLR test
provides a criterion that is more robust to noise. In the case of
Gaussian noise, this criterion involves both a normalized correlation
term and a term that evaluates the signal-to-noise ratio of the noisy data.
Under non-Gaussian noise distributions, criteria derived
from the GLR test are generally not known in closed form but require a
few iterations to be evaluated. When variance stabilization technique
can be employed, our numerical experiments show that good performance
is reached using Gaussian GLR after variance stabilization.

% References should be produced using the bibtex program from suitable
% BiBTeX files (here: strings, refs, manuals). The IEEEbib.bst bibliography
% style file from IEEE produces unsorted bibliography list.
% -------------------------------------------------------------------------
\small
\bibliographystyle{IEEEbib}

%\bibliography{biblio}

\end{document}